\newtheorem{theorem}{Theorem}[section]
\newtheorem{lemma}[theorem]{Lemma}
\theoremstyle{definition}
\theoremstyle{remark}
\newtheorem{remark}[theorem]{Remark}
\numberwithin{equation}{section}
\begin{document}

\title{Average Precision at Cutoff k under Random Rankings: Expectation and Variance}

\author{Tetiana Manzhos}
\address{Department of Advanced Mathematics,
Kyiv National Economic University named after Vadym Hetman, Kyiv 03057, Ukraine}
\email{tmanzhos@kneu.edu.ua}

\author{Tetiana Ianevych} 
\address{Department of Probability Theory, Statistics and Actuarial Mathematics, Taras Shevchenko National University of Kyiv, Kyiv 01601, Ukraine}
\email{tetianayanevych@knu.ua}
\thanks{The second author was supported by the Ministry of Education and Science of Ukraine within the grant for the prospective development of the scientific field “Mathematical Sciences and Natural Sciences” at Taras Shevchenko National University of Kyiv.}
 
\author{Olga Melnyk}
\address{Department of Advanced Mathematics,
Kyiv National Economic University named after Vadym Hetman, Kyiv 03057, Ukraine}
\email{melnyk.olha@kneu.edu.ua}

\subjclass[2020]{Primary 60E05,  60C05; Secondary 62R07, 68T05}

\date{02/NOV/2025}

\dedicatory{This paper is dedicated to the 85th anniversary of the birth of Prof. Yuriy Kozachenko. 
}

\keywords{Recommendation systems, information retrieval, ranking algorithms, AP@k, MAP@k, evaluation metrics, expectation and variance, random rankings}

\begin{abstract}
Recommender systems and information retrieval platforms rely on ranking algorithms to present the most relevant items to users, thereby improving engagement and satisfaction. Assessing the quality of these rankings requires reliable evaluation metrics. Among them, Mean Average Precision at cutoff k (MAP@k) is widely used, as it accounts for both the relevance of items and their positions in the list. 

In this paper, the expectation and variance of Average Precision at k (AP@k) are derived since they can be used as biselines for MAP@k. Here, we covered two widely used evaluation models: offline and online. The expectation establishes the baseline, indicating the level of MAP@k that can be achieved by pure chance. The variance complements this baseline by quantifying the extent of random fluctuations, enabling a more reliable interpretation of observed scores. 
\end{abstract}

\maketitle

\section{Introduction} 
Recommendation systems play a pivotal role in modern digital environments, designed to predict and suggest items such as movies on streaming platforms, products on e-commerce sites,  or news articles on media platforms based on users' preferences and historical interactions.  These systems leverage sophisticated algorithms that analyze user behavior,  item attributes,  and contextual data to provide personalized recommendations, thereby enhancing user satisfaction and engagement.

Ranking algorithms are fundamental to recommendation systems, as they determine the order in which items are presented to users. These algorithms prioritize relevant items and arrange them in a sequence designed to maximize user satisfaction and engagement. Metrics that assess the effectiveness of these rankings are essential for evaluating algorithmic performance. For instance, \textit{Kendall Tau Distance} measures the similarity between the recommended ranking order and the ground truth, providing insights into the consistency and reliability of recommendation algorithms. \textit{Reciprocal Rank (RR)} and  \textit{Mean Reciprocal Rank (MRR)} focus on the position of the first relevant recommendation in the list, emphasizing the importance of early relevance in user interaction. A comprehensive review of relevant metrics is provided in \cite{arXiv:2312.16015}, offering a deeper exploration of these evaluation methods.

Among various evaluation metrics, \textit{Average Precision at $k$ (AP@k)} plays a crucial role in recommendation systems due to its ability to assess the precision of recommended items up to a specified rank $k$. It not only accounts for the relevance of the items but also their position in the recommendation list, providing a comprehensive evaluation of ranking effectiveness. AP@k is particularly important in scenarios where users are more likely to interact with top-ranked items, highlighting the necessity for recommendation algorithms to accurately prioritize relevant content. For example, in search engines like Google that are based on the PageRank method (for details, see \cite{BrinPage1998, EngstromSilvestrov2017} and the references therein), users tend to click on the top few results, making it crucial for the search algorithm to rank the most relevant results at the top.
A natural extension of AP@k is MAP@k (Mean Average Precision at $k$), which averages the AP@k scores across users. While AP@k evaluates the ranking quality for individual users, MAP@k provides an overall measure of the algorithm's performance across the entire user base, giving a more generalized assessment of the system’s ability to rank relevant items consistently.

This paper develops closed-form expressions for the expected value and the variance of AP@k under random rankings. Since MAP@k averages user-level AP@k, the expectation specifies the random baseline for the metric, whereas the variance determines the scale of random fluctuations around that baseline. Together, these results indicate the level of MAP@k that can arise fully by chance and provide a quantitative reference for interpreting how far an observed MAP@k exceeds the random baseline. This baseline supports principled benchmarking and guides the optimization of recommendation strategies.

This paper is organized as follows. Section~\ref{sec1} introduces two evaluation settings that we consider (offline and online) and explains their practical relevance. Section~\ref{sec2} provides formal definitions of AP@k and MAP@k and clarifies their relationship. Sections~\ref{sec3} and \ref{sec4} present closed-form expressions for the expectation and variance of AP@k in both offline and online scenarios and include the corresponding proofs. Section~\ref{sec5} presents some practical results and comparative tables that illustrate the particularities of AP@k in several scenarios, summarizes the main findings, and suggests directions for future research.
\section{Offline and online evaluation} \label{sec1}
In the evaluation of recommendation systems, two principal approaches are widely used: \textit{offline} evaluation and \textit{online} one. A comprehensive discussion of these methods can be found in \cite{Agarwal_Chen_2016}. Both approaches share the common principle of treating item relevance as binary –- classifying items as relevant or not –- but they differ in the methodology used to estimate the proportion of relevant items.

\textit{Offline} evaluation relies on historical data, where the relevance of items for each user is known. It is particularly useful when comprehensive, labeled data, such as historical interactions, is available, making it possible to test recommendation algorithms in a controlled environment.

In this scenario, metrics such as MAP@k (Mean Average Precision at $k$) can be computed by averaging AP@k over a subset of users. Since the relevant items for each user are predefined, the number of such items, denoted as $m$, remains fixed. In the evaluation phase, a subset of $N$ items per user is selected, with $m$ known as relevant. If we want to compare the ranking algorithm with a random one, we need to define the random experiment that can be utilized under such a scenario. In this case, one random result differs from another only in the places where $m$ relevant elements are located among $N$ available places. It corresponds to the experiment in which we sample $m$ items from $N$ with equal probabilities and without replacement.

In this context, the expected value for AP@k under the random recommendation approach serves as a natural baseline for comparison. By evaluating the system performance against this baseline, offline evaluation helps establish how much better the algorithm performs compared to random item recommendation, assuming a fixed number of relevant items $m$. This benchmarking not only highlights the algorithm’s effectiveness but also provides a foundation for assessing improvements across different methods.

Even though \textit{online} evaluation also treats item relevance as binary, it does not require a predefined number of relevant items $m$ within a set of $N$ items per user. This is because, in an online evaluation, it is impractical to determine a fixed number of relevant items for each user. Instead, it is essential to ensure that every participant in the sample is shown at least $k$ items from the recommendation list. For reliable evaluation in this scenario, two key conditions must be met: the pool of items used to generate the recommendation list must be sufficiently large, and the proportion of relevant items should remain approximately stable across all users in the treated group. These conditions enable the evaluation of the recommendation system's performance by comparing it against the expected baseline under random recommendations. This comparison is facilitated by the probability $p$, which represents the likelihood of each item being viewed (i.e., relevant). With this probability, it is possible to calculate the expected value of the evaluation metric under random item presentation, serving as an effective benchmark.

The random experiment that fits this type of evaluation is Bernoulli sampling. It means that we have $N$ items, and suppose that every item can be either relevant or not, independently of each other, but with a probability of being relevant equal to $p$. This is equivalent to sampling $m$ items from $N$, where $m=pN$ ($m$ is assumed to be natural) with equal probabilities and with replacement.

\section{Mean Average Precision at $k$}\label{sec2}
Having established the context of offline and online evaluation, we now turn to the formal definition of the Mean Average Precision at $k$ (MAP@k) metric, which is central to this paper. This metric is widely used for assessing the quality of ranked recommendation lists by taking into account both the relevance of items and their positions within the list. To begin, we first define Average Precision at $k$ (AP@k) for a single user, which serves as the building block for MAP@k.

The  Average  Precision at  $k$  measures the ranking quality of relevant items by averaging precision values at positions where relevant items appear. Formally, it is defined as:
\begin{equation}\label{eq1}
AP@k=\frac 1k \sum_{i=1}^{k}P@i \cdot rel(i),
\end{equation}
where:
\begin{itemize}
    \item $k$ is the length of the recommendation list;
    \item $P@i=\frac {\sum_{j=1}^{i}rel(j)}{i}$ is the precision at position $i$;
    \item $rel(i)\in\{0,1\}$ is a binary indicator, equal to 1 if the item at position $i$ is relevant and 0 otherwise.
\end{itemize}

In this general form, the denominator $k$ normalizes the metric over the entire recommendation list. However, when the total number of relevant items $m$ is smaller than $k$, a modified normalization factor, $\min(m,k)$, is used:
\begin{equation} \label{eq2}
AP@k=\frac {1}{\min(m,k)} \sum_{i=1}^{k}P@i \cdot rel(i).
\end{equation}

This adjustment ensures that the metric appropriately handles situations where there are fewer relevant items than the length of the recommendation list, avoiding artificially deflated scores. Without this modification, even in the best scenario – where all $m$ relevant items are ranked in the top positions – the AP@k score would never reach 1, because the denominator would be larger than $m$, preventing the score from reaching its maximum value.

Having defined Average Precision at $k$ (AP@k), which measures the ranking quality for a single user, we now extend this concept to the Mean Average Precision at $k$ (MAP@k), which aggregates the AP@k values across all users in the dataset. MAP@k provides a more comprehensive evaluation of the recommendation system by averaging the precision scores for all users, thereby reflecting the system's overall performance in ranking relevant items.
Formally, MAP@k is computed as the mean of the AP@k scores across all users:
\begin{equation} \label{eq3}
MAP@k=\frac 1{\lvert U \rvert} \sum_{u \in U}AP@k_u,
\end{equation}
where:
\begin{itemize}
    \item $U$ is the set of all users in the dataset;
    \item $AP@k_u$ is the $AP@k$ score for user $u$.
\end{itemize}

Turning to the interpretation of the MAP@k metric, a high MAP@k value indicates that most users encounter relevant items ranked highly, reflecting an effective recommendation system. Conversely, a low MAP@k suggests that relevant items appear lower, signaling a less efficient algorithm. The metric ranges from 0 to 1, where 1 means all relevant items are within the top-$k$ positions, and 0 means none are included.
However, for a comprehensive evaluation, it is essential to compare the MAP@k score with the expected value derived from random recommendation generation. While one might assume that this expected value equals the proportion of relevant items in the collection, this is not the case. As demonstrated by Yves Bestgen in his work \cite{article_Bestgen}, the expected average precision for random recommendations is more complex and varies depending on various factors such as the number of relevant items in the list and other characteristics. In particular.   In the current study, we derive precise formulas for both the offline and online evaluation scenarios described above, providing an accurate benchmark for comparison against the random baseline, under sampling without and with replacement, respectively.

\section{Expectation and variance of AP@k for the offline evaluation (sampling without replacement)}\label{sec3}

If we want to determine how much the algorithm outperforms random chance, we can compare the MAP@k of the algorithm with the expected value of AP@k if randomness is imposed by some random experiment. The expected AP@k alone is not sufficient for a comprehensive evaluation, but it provides the crucial baseline against which meaningful improvements can be interpreted. In addition, the variance quantifies the spread of AP@k values around this baseline, completing the picture by indicating the scale of random fluctuations. 

We begin with the offline evaluation setting described in Section \ref{sec1}, corresponding to sampling without replacement. In this case, among $N$ items, exactly $m$ are relevant, and one random outcome differs from another only by the positions of the $m$ relevant items in the ranking. This makes the offline setting a natural starting point for analysis. 

In this randomization model, the positions of the $m$ relevant items among the $N$ are chosen uniformly at random. 
Equivalently, the ranking can be viewed as a uniformly random permutation of all $N$ items, so that the $m$ relevant ones are placed at random positions. 
From the mathematical point of view, this means that for each recommendation $j = 1, 2, \ldots, N$ we define a binary random variable $I_j$, with $I_j = 1$ if the item at position $j$ is relevant and $I_j = 0$ otherwise. Furthermore,  $\sum_{j=1}^NI_j=m$ and the random variables $I_1, I_2, ..., I_N$ are not independent.

Under this model, the probability that the $j$-th item is relevant can be calculated as
\[
P (I_j = 1) =\frac{C_{N-1}^{m-1}}{C_N^m}=\frac{m}{N}.
\]

So,  $E(I_j) = m/N$ for every $j$ and the random variable $\sum_{j=1}^iI_j$, which counts the number of relevant items among the first $i$ items,  follows the \textit{Hypergeometric} distribution with parameters $N$, $m$ and $i$. Moreover 

$$\operatorname{E}(P@i)=\frac{1}{i}\sum_{j=1}^i\operatorname{E} (I_j)= \frac{1}{i}\sum_{j=1}^i \frac{m}{N}=\frac{m}{N}, \quad j=1,..., k.$$
In this case, $AP@k$ is computed according to formula \eqref{eq2} where the normalization is made by $\min(m,k)$:
$$AP@k=\frac{1}{\min(m,k)}\sum_{i=1}^k P@i\cdot I_{i}.$$

\begin{remark}
    For identification that the expectations are calculated with respect to different random models,  we used different subscripts (WOR and WR) in the theorems' formulation. In the proofs, we omit these subscripts to simplify the formulas. 
\end{remark}

\begin{theorem}\label{th1wor} If among  $N$ available items, exactly $m$ are relevant for every user,  then the value $MAP_{WOR}@k$ (i.e., the expectation of $AP@k$ with respect to all possible random results under sampling without replacement) is equal to
\begin{equation}
MAP_{WOR}@k= \operatorname{E}_{WOR} (AP@k) =\frac{m}{N\cdot \min (k,m)}\left(\frac{m-1}{N-1}k+\frac {N-m}{N-1} H_k\right),
\end{equation}
where $H_k=\sum_{i=1}^k\frac1i$ is the $k$-th ''harmonic'' number that can be calculated using the approximation $H_k\approx \ln k+\gamma+\frac1{2k}$, $\gamma=0.5772$ is the Euler-Mascheroni constant.
\end{theorem}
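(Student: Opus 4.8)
The plan is to compute $\operatorname{E}(AP@k)$ directly by expanding the definition and then applying linearity of expectation, reducing the whole problem to the first two joint moments of the indicators $I_j$. Starting from $AP@k=\frac{1}{\min(m,k)}\sum_{i=1}^k P@i\cdot I_i$ and substituting $P@i=\frac1i\sum_{j=1}^i I_j$, I would rewrite the metric as the normalized double sum
\[
AP@k=\frac{1}{\min(m,k)}\sum_{i=1}^k\frac1i\sum_{j=1}^i I_jI_i.
\]
The central observation is to split the inner sum into its diagonal and off-diagonal parts. Since each $I_i$ takes values in $\{0,1\}$, we have $I_i^2=I_i$, so $\sum_{j=1}^i I_jI_i=I_i+\sum_{j=1}^{i-1}I_jI_i$. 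This separation is what lets the two distinct joint moments enter the calculation cleanly.

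The next step is to evaluate those moments under sampling without replacement. The excerpt already supplies $\operatorname{E}(I_i)=m/N$. For a pair of distinct positions $j\ne i$, the event $\{I_j=1,\,I_i=1\}$ requires two prescribed positions to both carry relevant items, which occurs with probability $\frac{m(m-1)}{N(N-1)}$ (equivalently $C_{N-2}^{m-2}/C_N^m$). Taking expectations term by term therefore gives
\[
\operatorname{E}\!\left[\sum_{j=1}^i I_jI_i\right]=\frac{m}{N}+(i-1)\frac{m(m-1)}{N(N-1)}
\]
for each $i=1,\dots,k$.

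Substituting this back and interchanging the finite sums, I would split the resulting expression into two contributions indexed by the two moments. This produces exactly two elementary sums, namely $\sum_{i=1}^k\frac1i=H_k$ and $\sum_{i=1}^k\frac{i-1}{i}=\sum_{i=1}^k\bigl(1-\tfrac1i\bigr)=k-H_k$. Collecting terms yields
\[
\operatorname{E}(AP@k)=\frac{m}{N\,\min(m,k)}\left[H_k+\frac{m-1}{N-1}\,(k-H_k)\right],
\]
and a short regrouping, $H_k+\frac{m-1}{N-1}(k-H_k)=\frac{m-1}{N-1}k+\frac{N-m}{N-1}H_k$, delivers the stated closed form.

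There is no deep obstacle here; the result is essentially a careful bookkeeping exercise. The two points demanding attention are the diagonal/off-diagonal decomposition together with the correct use of $I_i^2=I_i$ (miscounting the diagonal would corrupt the coefficient of $H_k$), and the harmonic identity $\sum_{i=1}^k\frac{i-1}{i}=k-H_k$, which is precisely what converts the linear-in-$k$ term and the harmonic term into their final proportions. Everything else is routine algebraic simplification, and the factor $\min(m,k)$ simply passes through the normalization unchanged.
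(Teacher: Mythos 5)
Your proof is correct: the expansion $AP@k=\frac{1}{\min(m,k)}\sum_{i=1}^k\frac1i\sum_{j=1}^i I_jI_i$, the diagonal/off-diagonal split using $I_i^2=I_i$, the pairwise inclusion probability $\operatorname{E}(I_jI_i)=\frac{m(m-1)}{N(N-1)}$ for $j\neq i$, and the sums $H_k$ and $\sum_{i=1}^k\frac{i-1}{i}=k-H_k$ all check out and deliver exactly the stated closed form. The route differs from the paper's in one respect: the paper computes $\operatorname{E}(P@i\cdot I_i)$ by conditioning on $I_i$, writing $\operatorname{E}(P@i\cdot I_i)=\operatorname{E}(P@i\mid I_i=1)\,P\{I_i=1\}$ and identifying the conditional law of $i\cdot P@i-1$ given $I_i=1$ as hypergeometric with parameters $(N-1,m-1,i-1)$, whereas you bypass conditioning entirely and work with linearity of expectation over the joint moments of the exchangeable indicators. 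The two computations are isomorphic --- your $\frac{m}{N}+(i-1)\frac{m(m-1)}{N(N-1)}$ is precisely the paper's $\frac{m}{N}\bigl((i-1)\frac{m-1}{N-1}+1\bigr)$ --- so the choice is stylistic here; your version is slightly more elementary and self-contained, while the paper's conditioning viewpoint is the one it then reuses and extends for the second-moment calculations in the variance theorem, where conditioning on $I_i$ (and on $I_iI_l$) does real work.
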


\begin{proof}
$$MAP@k=\operatorname{E}(AP@k) =\frac{1}{\min (m,k)}\sum_{i=1}^k \operatorname{E}(P@i\cdot I_i).$$
Let's first calculate the following:
\begin{equation*}
\begin{split}
\operatorname{E}(P@i\cdot I_i)&=\operatorname{E}\big(\operatorname{E}(P@i\cdot I_i|I_i)\big)=\\
&=\operatorname{E}\big(P@i\cdot I_i|I_i=1\big)P\{I_i=1\}+\operatorname{E}\big(P@i\cdot I_i|I_i=0\big)P\{I_i=0\}=\\
&=\operatorname{E}\big(P@i|I_i=1\big)\cdot\frac{m}{N}=\frac{1}{i}\bigg((i-1)\frac{m-1}{N-1}+1\bigg)\cdot\frac{m}{N},
\end{split}
\end{equation*}
since, if we know that $I_i=1$, then $P@i$ is distributed as $\frac{1}{i}\cdot \big(Hypergeometric(N-1,m-1,i-1)+1\big).$ 
Therefore,
\begin{equation*}
\begin{split}\operatorname{E}(AP@k)&=\frac{1}{\min(m,k)}\cdot\frac{m}{N}\sum_{i=1}^k \bigg((1-\frac{1}{i})\frac{m-1}{N-1}+\frac{1}{i}\bigg)=\\
&=\frac{1}{\min(m,k)}\cdot\frac{m}{N}\sum_{i=1}^k \left((1-\frac{m-1}{N-1})\frac{1}{i}+\frac{m-1}{N-1} \right)=\\
&=\frac{1}{\min(m,k)}\cdot\frac{m}{N} \left(k\cdot \frac{m-1}{N-1}+(1-\frac{m-1}{N-1})\sum_{i=1}^k \frac{1}{i} \right).
\end{split}
\end{equation*}
For calculating the sums $\sum_{i=1}^k \frac{1}{i} $, which are, actually, the partial sums of the harmonic series and were named as harmonic numbers $ H_{k}$ in 1968 by Donald Knuth \cite{Knuth1968}, we can use the well-known formula (see \cite{Boas01101971})
$$
H_{k}=\ln k+\gamma +{\frac {1}{2k}}-\varepsilon _{k},
$$
where $ \gamma \approx 0.5772$ is the Euler–Mascheroni constant and 
$ 0\leq \varepsilon _{k}\leq 1/8k^{2}$  which tends to 0 as $ 
k\to \infty$ .
This completes the proof.
\end{proof}
\begin{remark}
For small $k$, the numbers $H_k$ can be calculated exactly by ordinary summation. 
\end{remark}

While the expected value $\operatorname{E}(AP@k)$ (equivalently $MAP@k$) provides the baseline for evaluating ranking quality, the variance complements it by quantifying the extent of random fluctuations. Its explicit expression is established in the following theorem.

\begin{theorem} \label{th2wor} If we consider the offline evaluation procedure, that is, the random appearance of a fixed number $m$ of relevant items corresponds to sampling $m$ items from $N$ with equal probabilities and without replacement, then 
\begin{equation*}
\begin{split} \operatorname{Var}_{WOR}\big(AP@k\big)=\frac{1}{M^2} \frac{m}{N}\Bigg[& k(C+2(E-F) + (k-1)G) + H_k(B-2(E-kF))+\\ 
& + H_k^2 \cdot D + H_k^{\left(2\right)}(A-D) \Bigg],
\end{split}
\end{equation*}
where 
\begin{align*}
 A & =  1- \frac{m}{N}-\frac{m-1}{N-1} \left(3-2\frac{m-2}{N-2}-\frac{m}{N}\left(2-\frac{m-1}{N-1}\right)\right); \\
 B & =  \frac{m-1}{N-1}\left(3\left(1-\frac{m-2}{N-2}\right)-2\frac{m}{N}\left(1-\frac{m-1}{N-1}\right)\right);\\
C & = \frac{m-1}{N-1}\left(\frac{m-2}{N-2}-\frac{m(m-1)}{N(N-1)}\right);\\
D & = \frac{m-1}{N-1}\left(2-5\frac{m-2}{N-2}+3\frac{(m-2)(m-3)}{(N-2)(N-3)}\right)-\frac{m}{N}\left(1-\frac{m-1}{N-1}\right)^2;\\
E& = \frac{m-1}{N-1}\left(3\frac{m-2}{N-2}\left(1-\frac{m-3}{N-3}\right)-\frac{m}{N}\left(1-\frac{m-1}{N-1}\right)\right);\\
F &= \frac{m-1}{N-1}\left(\frac{m-2}{N-2}\left(1-\frac{m-3}{N-3}\right)-\frac{m}{N}\left(1-\frac{m-1}{N-1}\right)\right); 
\end{align*}
 \begin{align*}
G & = \frac{m-1}{N-1}\left(\frac{(m-2)(m-3)}{(N-2)(N-3)}-\frac{m}{N}\frac{m-1}{N-1}\right);\\
H_{k} & = \sum_{i=1}^k \frac{1}{i}; \quad H^{(2)}_{k} = \sum_{i=1}^k \frac{1}{i^2};\\
M & = \min(m,k). 
\end{align*}
\end{theorem}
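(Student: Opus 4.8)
The plan is to reduce the variance to a fourth–moment computation over the indicators $I_1,\dots,I_N$ and to exploit the single combinatorial fact that governs this model: for any $t$ \emph{distinct} positions, sampling without replacement gives
\[
\operatorname{E}(I_{j_1}I_{j_2}\cdots I_{j_t})=\frac{m(m-1)\cdots(m-t+1)}{N(N-1)\cdots(N-t+1)}=:q_t .
\]
Since each $I_j\in\{0,1\}$ satisfies $I_j^2=I_j$, every product of indicators collapses to the product over its \emph{distinct} indices, so any mixed moment equals $q_t$ with $t$ the number of distinct positions involved. I would first rewrite $M\cdot AP@k$ in an index-explicit form. Using $P@i\cdot I_i=\frac1i I_i\sum_{j=1}^i I_j$ and $I_i^2=I_i$,
\[
S:=M\cdot AP@k=\sum_{i=1}^k\frac1i\,I_i\sum_{j=1}^iI_j=\sum_{1\le j\le i\le k}\frac1i\,I_iI_j .
\]
Then $\operatorname{Var}(AP@k)=\frac1{M^2}\bigl(\operatorname{E}(S^2)-(\operatorname{E}S)^2\bigr)$, where $\operatorname{E}S=M\cdot\operatorname{E}(AP@k)=\frac mN\bigl(\frac{m-1}{N-1}k+\frac{N-m}{N-1}H_k\bigr)$ is already available from Theorem~\ref{th1wor}; only $\operatorname{E}(S^2)$ is new.

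Next I would expand the square into a quadruple sum over two ordered pairs and take expectations termwise:
\[
\operatorname{E}(S^2)=\sum_{1\le j\le i\le k}\ \sum_{1\le p\le l\le k}\frac{1}{il}\,\operatorname{E}(I_iI_jI_lI_p)=\sum_{i,j,l,p}\frac{1}{il}\,q_{\,\lvert\{i,j,l,p\}\rvert}.
\]
The computation then amounts to sorting the quadruples $(i,j,l,p)$ — subject to $j\le i$ and $p\le l$ — according to which of the four indices coincide, so that each class is attached to the correct $q_t$ ($t=1,2,3,4$) and summed against the weight $1/(il)$. The decisive point is that the inner indices $j$ and $p$ carry \emph{no} weight and range over intervals, so summing them produces plain \emph{counts} (such as $i$, $i-1$, or $i-2$). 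Feeding these counts back against the weights $1/i$, $1/l$ is exactly what generates the four building blocks in the statement: single sums such as $\sum_{i}\frac1i=H_k$, $\sum_i\frac1i\,(i-1)=k-H_k$, and $\sum_i\frac{1}{i^2}=H^{(2)}_k$, together with the double sums $\sum_{i\ne l}\frac{1}{il}=H_k^2-H^{(2)}_k$ and $\sum_{i}\frac1i\sum_l\frac1l=H_k^2$. Collecting the contributions by the monomials $k$, $k(k-1)$, $kH_k$, $H_k$, $H_k^2$, $H^{(2)}_k$ and then subtracting $(\operatorname{E}S)^2$ — itself a quadratic in $k$ and $H_k$ — yields the coefficients $A$ through $G$. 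In particular the subtracted square supplies the $-\frac mN\bigl(1-\frac{m-1}{N-1}\bigr)^2$ piece of $D$ (the $H_k^2$–coefficient), since $1-\frac{m-1}{N-1}=\frac{N-m}{N-1}$.

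The real labor — and the main obstacle — is the exhaustive and careful enumeration of the coincidence patterns of $(i,j,l,p)$ under the ordering constraints. There are many sub-cases according to which indices are equal and how the equalities interact with $j\le i$ and $p\le l$ (for instance $i=l$ with $j,p$ distinct, $j=p$ with $i,l$ distinct, the cross-pattern cases $i=p$ or $j=l$ that force chains of inequalities, and the fully coincident diagonal $i=j=l=p$), and each class must be attributed to the right $q_t$ and to the right monomial. Two pitfalls deserve the most care: keeping the endpoint bookkeeping exact (whether a summation range is inclusive or exclusive, i.e.\ whether $j=i$ is allowed in a given class, which shifts a count by $1$), and cleanly separating diagonal from off-diagonal double sums, since this is precisely the split that distinguishes the $H_k^2$ term (coefficient $D$) from the $H^{(2)}_k$ term (coefficient $A-D$). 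Once the patterns are tabulated correctly, assembling $A,\dots,G$ and verifying the claimed grouping is routine algebra.
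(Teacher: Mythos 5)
Your plan is sound and would reach the stated formula, but it organizes the computation differently from the paper. The paper decomposes $\operatorname{Var}(AP@k)=\frac{1}{M^2}\sum_{i,l}c_{il}$ into variances $c_{ii}$ and covariances $c_{il}$ ($i<l$) of the terms $P@i\cdot I_i$, and evaluates each one by conditioning: the law of total variance for $c_{ii}$ (conditioning on $I_i$) and the identity $\operatorname{E}(P@iI_i\cdot P@lI_l)=\operatorname{E}(P@iI_i\cdot P@lI_l\mid I_iI_l=1)P(I_iI_l=1)$ for $c_{il}$, using the conditional means $\operatorname{E}(I_j\mid I_i=1)=\frac{m-1}{N-1}$, $\operatorname{E}(I_jI_r\mid I_iI_l=1)=\frac{(m-2)(m-3)}{(N-2)(N-3)}$, and so on. Your route dispenses with conditioning and works directly with the factorial moments $q_t$ of the exchangeable indicators in a flat quadruple sum. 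The two are mathematically equivalent (the conditional moments are exactly the ratios $q_{t+1}/q_1$ and $q_{t+2}/q_2$), so the combinatorial input is identical; what the paper's organization buys is that each $c_{il}$ is a self-contained two-outer-index computation in which only the inner indices $j,r$ need to be classified, whereas your single quadruple sum forces you to enumerate all coincidence patterns of $(i,j,l,p)$ under $j\le i$, $p\le l$ at once. Note, for instance, that the pattern $j=l$ you list is vacuous when $i<l$, since $j\le i<l$; your enumeration must catch such interactions. The final summation step is the same in both: the paper's Lemma~\ref{lem1} supplies precisely the identities $\sum_{i<l}\frac1{il}=\frac12(H_k^2-H_k^{(2)})$, $\sum_{i<l}\frac1i=k(H_k-1)$, $\sum_{i<l}\frac1l=k-H_k$ that you invoke, and your observation that the subtracted $(\operatorname{E}S)^2$ contributes the $-\frac mN\bigl(1-\frac{m-1}{N-1}\bigr)^2$ piece of $D$ checks out against the paper's grouping.

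The one caveat is that your proposal explicitly defers the case enumeration and the assembly of $A,\dots,G$, which is where essentially all of the work and all of the risk of error lie; as written it is a correct and workable blueprint rather than a completed proof. To close it you would need to tabulate the coincidence classes, verify the counts produced by summing out $j$ and $p$ (your endpoint warnings are exactly the right ones), and carry out the algebra that matches the seven stated coefficients.
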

In the proof, we need some technical results, which are formulated as a lemma.  We present it before the proof of Theorem~\ref {th2wor}. 

\begin{lemma}\label{lem1}
For any $k\geq 1$
\begin{itemize}
\item[(i)]  $\sum_{i=1}^{k-1}\sum_{l=i+1}^{k}\frac{1}{i}  =  k(H_k-1);$
\item[(ii)] $\sum_{i=1}^{k-1} \sum_{l=i+1}^{k} \frac{1}{l}  = k-\sum_{l=1}^{k}\frac{1}{l}=k-H_k;$
\item[(iii)] $\sum_{i=1}^{k-1} \sum_{l=i+1}^{k} \frac{1}{i\cdot l}  =  \frac{1}{2}\bigg(H_k^2-H_k^{\left(2\right)}\bigg),$   
\end{itemize}
where  $H_k=\sum_{i=1}^{k} \frac{1}{i}$ is a standard harmonic number; 
      $H_k^{\left(2\right)}=\sum_{i=1}^{k} \frac{1}{i^2}$ is the $k$-th partial sum of the 2nd order harmonic series.
\end{lemma}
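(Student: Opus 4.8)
The plan is to treat each of the three identities by elementary manipulation of the double sums, exploiting in each case a different structural feature of the summand: in (i) the summand is independent of the inner index, in (ii) the order of summation can be interchanged so that the inner sum collapses, and in (iii) a symmetry argument relating the sum to $H_k^2$ does the work.

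For part (i), I would observe that the summand $\tfrac{1}{i}$ does not depend on $l$, so the inner sum contributes only the factor $(k-i)$, the number of admissible values $l\in\{i+1,\dots,k\}$. This reduces the expression to $\sum_{i=1}^{k-1}\tfrac{k-i}{i}=k\sum_{i=1}^{k-1}\tfrac1i-(k-1)$. Substituting $\sum_{i=1}^{k-1}\tfrac1i=H_k-\tfrac1k$ and simplifying gives $kH_k-1-(k-1)=k(H_k-1)$.

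For part (ii), I would interchange the order of summation. The index set $\{(i,l):1\le i\le k-1,\ i+1\le l\le k\}$ is precisely $\{(i,l):1\le i<l\le k\}$, so summing over $i$ first yields, for each fixed $l\in\{2,\dots,k\}$, exactly $l-1$ copies of $\tfrac1l$. Hence the sum equals $\sum_{l=2}^{k}\tfrac{l-1}{l}=(k-1)-\sum_{l=2}^{k}\tfrac1l=(k-1)-(H_k-1)=k-H_k$, using $\sum_{l=2}^{k}\tfrac1l=H_k-1$.

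The key step is part (iii), where I would exploit the symmetry of the summand $\tfrac{1}{i\,l}$ under the exchange $i\leftrightarrow l$. Expanding the square of the harmonic number as a full double sum, $H_k^2=\sum_{i=1}^{k}\sum_{l=1}^{k}\tfrac{1}{i\,l}$, I would split the index square into its diagonal ($i=l$) and its two off-diagonal triangles ($i<l$ and $i>l$). By symmetry the two triangular sums are equal, and each coincides with the target sum $\sum_{1\le i<l\le k}\tfrac{1}{i\,l}$, while the diagonal contributes $H_k^{(2)}$. This gives $H_k^2=H_k^{(2)}+2\sum_{1\le i<l\le k}\tfrac{1}{i\,l}$, and solving for the off-diagonal sum yields $\tfrac12\bigl(H_k^2-H_k^{(2)}\bigr)$. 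None of the three parts presents a genuine obstacle; the only care required throughout is correct bookkeeping of the index ranges—in particular checking that the constraint $i\le k-1$ is automatically enforced once $i<l\le k$—so I expect the verification to be routine rather than difficult.
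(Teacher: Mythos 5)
Your proposal is correct and follows essentially the same route as the paper: collapsing the inner sum in (i), interchanging the order of summation in (ii), and using the symmetry of $\tfrac{1}{i\,l}$ to relate the triangular sum to $H_k^2 - H_k^{(2)}$ in (iii). The only cosmetic difference is in (i), where you substitute $\sum_{i=1}^{k-1}\tfrac{1}{i}=H_k-\tfrac{1}{k}$ while the paper adds and subtracts the $i=k$ term, which is the same computation.
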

\begin{proof}
(i) \[
\sum_{i=1}^{k-1}\sum_{l=i+1}^{k}\frac{1}{i} = \sum_{i=1}^{k-1}\frac{1}{i}(k-i) = k \sum_{i=1}^{k-1}\frac{1}{i} - (k-1) + (\frac{1}{k} - \frac{1}{k})k = k \sum_{i=1}^{k}\frac{1}{i} - k = k(H_k-1).
\]
(ii)  $$
     \sum_{i=1}^{k-1} \sum_{l=i+1}^{k} \frac{1}{l}=\sum_{l=2}^{k} \sum_{i=1}^{l-1} \frac{1}{l}=\sum_{l=2}^{k}\frac{l-1}{l} =k-1-\bigg(\sum_{l=1}^{k}\frac{1}{l}-1\bigg)=k-\sum_{l=1}^{k}\frac{1}{l}=k-H_k.
     $$
(iii) For the calculation of the last equality, we can use the fact  that for any finite collection of numbers $\{a_{il}\}_{i,l=1}^{k}$, such that $a_{il}=a_{li}$ if $i\neq l$ 
  $$\sum_{i=1}^{k-1} \sum_{l=1+1}^{k}a_{il}=\frac{1}{2}\bigg(\sum_{i=1}^{k} \sum_{l=1}^{k} a_{il}-\sum_{i=1}^{k}a_{ii}\bigg).$$ 
  So, 
$$\sum_{i=1}^{k-1} \sum_{l=i+1}^{k} \frac{1}{i\cdot l}= \frac{1}{2}\bigg(\sum_{i=1}^{k} \sum_{l=i+1}^{k} \frac{1}{i\cdot l}-\sum_{i=1}^{k} \frac{1}{i^2}\bigg) = \frac{1}{2}\bigg(\bigg(\sum_{i=1}^{k} \frac{1}{i}\bigg)^2 -\sum_{i=1}^{k} \frac{1}{i^2}\bigg) = \frac{1}{2}\bigg(H_k^2-H_k^{\left(2\right)}\bigg);
$$
\end{proof}

\begin{proof}[Proof of Theorem~\ref {th2wor}]
Since $\operatorname{Var}(AP@k) = \operatorname{Var}\left(\frac{1}{k} \sum_{i=1}^k P@i \cdot I_i\right) = \frac{1}{k^2} \sum_{i=1}^k \sum_{l=1}^k c_{il}$, we need to calculate  all  the coefficient $c_{il}$ taking into account that
 for every $i\neq l$
$c_{il} = c_{li} = \operatorname{cov}(P@i \cdot I_i, P@l \cdot I_l)$
and  for all $i$
$c_{ii} = \operatorname{Var}(P@i \cdot I_i)$.

Let us calculate them step by step, starting from the coefficients $c_{ii}$, $i=1,2,...,k.$ Since in this case the random variables $I_1,..., I_n$ are not independent, it is worth applying formulas that use conditional expectations. In particular, for two random variables $X$ and $Y$, we can calculate the variance by the formula:
$$\operatorname{Var}(X)=\operatorname{E}\big(\operatorname{Var}(X\mid Y)\big)+Var\big(\operatorname{E}(X\mid Y)\big).$$
So,
\begin{equation}\label{t1t2}
c_{ii}=\operatorname{Var}\big(P@i\cdot I_i\big)=\operatorname{E}\big(\operatorname{Var}(P@i\cdot I_i\mid I_i\big)+\operatorname{Var} \big(\operatorname{E}(P@i\cdot I_i\mid I_i) \big)=t_1+t_2.
\end{equation}
If $I_i=0,$ then $P@i\cdot I_i=0,$ and $\operatorname{E}\big(P@i\cdot I_i\mid I_i=0\big)=0, \operatorname{Var}\big(P@i\cdot I_i\mid I_i=0\big)=0.$ \\
If $I_i=1,$ then $\operatorname{E}\big(P@i\cdot I_i\mid I_i=1\big)=\frac{1}{i} \left((i-1)\frac{m-1}{N-1}+1 \right)$ and 
$$\operatorname{Var}\big(P@i\cdot I_i\mid I_i=1\big)=\operatorname{E}\left(\big(P@i\cdot I_i\big)^2\mid I_i=1\right)-\big(\operatorname{E}(P@i\cdot I_i\mid I_i=1)\big)^2.$$

First of all
\begin{equation*}
\begin{split}
\operatorname{E}\big((P@i\cdot I_i)^2\mid I_i=1\big)&=\operatorname{E}\left(\left(\frac{1}{i}\left(\sum_{j=1}^{i-1} I_jI_i+I_i^2\right)\right)^2\mid I_i=1\right)=\\
=&\frac{1}{i^2}\operatorname{E}\left(1+2\sum_{j=1}^{i-1} I_j
+\left(\sum_{j=1}^{i-1} I_j\right)^2\mid I_i=1\right)=\\
=&\frac{1}{i^2}\operatorname{E}\left(1+2\sum_{j=1}^{i-1} I_j
+\sum_{j=1}^{i-1} I_j^2+\sum \sum_{j\neq r}I_j I_r\mid I_i=1\right)=\\
=&\frac{1}{i^2}\left(1+3\sum_{j=1}^{i-1} \operatorname{E}\big(I_j \mid I_i=1\big)+
\sum \sum_{j\neq r}\operatorname{E}\big(I_j I_r\mid I_i=1\big)\right)=\\
=&\frac{1}{i^2}\left(1+3(i-1)\cdot \frac{m-1}{N-1}+
(i-1)(i-2)\frac{(m-1)(m-2)}{(N-1)(N-2)}\right),
\end{split}
\end{equation*}
as $\operatorname{E}\big(I_j\mid I_i=1\big)=\frac{C_{N-2}^{m-2}}{C_{N-1}^{m-1}}=\frac{m-1}{N-1}$ and $\operatorname{E}(I_j I_r \mid I_i=1)=\frac{C_{N-3}^{m-3}}{C_{N-1}^{m-1}}=\frac{(m-1)(m-2)}{(N-1)(N-2)}$.

Then \begin{equation*}
\begin{split} 
\operatorname{Var}(P@i\cdot I_i \mid I_i=1)&=\frac{1}{i^2}\left(1+3(i-1)\cdot \frac{m-1}{N-1}+
(i-1)(i-2)\frac{(m-1)(m-2)}{(N-1)(N-2)}\right)-\\
&-\frac{1}{i^2}\left((i-1)\frac{m-1}{N-1}+1 \right)^2
\end{split} 
\end{equation*}
and thus we obtain the first term in \eqref{t1t2}: 
\begin{equation*}
\begin{split} &t_1=\operatorname{E}\big(\operatorname{Var}(P@i\cdot I_i \mid I_i)\big)=\operatorname{Var}\big(P@i\cdot I_i \mid I_i=1\big)\cdot \frac{m}{N}=\\
&=\frac{m}{N} \cdot \frac{1}{i^2} \left[1+3(i-1)\cdot \frac{m-1}{N-1}+
(i-1)(i-2)\frac{(m-1)(m-2)}{(N-1)(N-2)}-((i-1)\frac{m-1}{N-1}+1)^2 \right].
\end{split}
\end{equation*}

Secondly, if $I_i=1$ the random variable $\operatorname{E}(P@i\cdot I_i \mid I_i)$ takes the value $\frac{1}{i}\cdot \big((i-1)\frac{m-1}{N-1}+1\big)$ with probability $\frac{m}{N}$ and $0$ otherwise. Therefore, 
\begin{equation*}
\begin{split} t_2&=\operatorname{Var}\big(\operatorname{E}(P@i\cdot I_i \mid I_i)\big)=\operatorname{E}\big((\operatorname{E}(P@i\cdot I_i \mid I_i))^2\big)-\big(\operatorname{E}(\operatorname{E}(P@i\cdot I_i \mid I_i))\big)^2=\\
&=\frac{1}{i^2}\left((i-1)\frac{m-1}{N-1}+1\right)^2\cdot \frac{m}{N}-\frac{1}{i^2}\left((i-1)\frac{m-1}{N-1}+1\right)^2\cdot \left(\frac{m}{N}\right)^2=\\
&=\frac{1}{i^2}\left((i-1)\frac{m-1}{N-1}+1\right)^2\cdot \frac{m}{N}\left(1-\frac{m}{N}\right).
\end{split}
\end{equation*}
So, finally, we have the following.
\begin{equation*}
\begin{split}
c_{ii}=&\operatorname{Var}\big(P@i\cdot I_i\big)=t_1+t_2=\\
= &\frac{m}{N} \cdot \frac{1}{i^2} \left(1+3(i-1)\cdot \frac{m-1}{N-1}+
(i-1)(i-2)\frac{(m-1)(m-2)}{(N-1)(N-2)}\right)-\\
&-\frac{m}{N} \cdot \frac{1}{i^2} \left((i-1)\frac{m-1}{N-1}+1\right)^2 +
\frac{1}{i^2}\left((i-1)\frac{m-1}{N-1}+1\right)^2\cdot \frac{m}{N}\left(1-\frac{m}{N}\right)=
\\
= &\frac{1}{i^2}\cdot \frac{m}{N} \left(1+3(i-1)\cdot \frac{m-1}{N-1}+
(i-1)(i-2)\frac{(m-1)(m-2)}{(N-1)(N-2)}-\right.\\
& -\left.\frac{m}{N}-2(i-1)\frac{m(m-1)}{N(N-1)} - 
 (i-1)^2\frac{m(m-1)^2}{N(N-1)^2} \right)=\\
= &\frac{1}{i^2}\cdot \frac{m}{N} \left(1- \frac{m}{N}+(i-1)\frac{m-1}{N-1}\left(3-2\frac{m}{N}+(i-2)\frac{m-2}{N-2}-(i-1)\frac{m}{N} \frac{m-1}{N-1}\right) \right)=\\
=& \frac{1}{i^2} \cdot \frac{m}{N} \left[1- \frac{m}{N}+\frac{m-1}{N-1} \left(2\frac{m-2}{N-2}-3+2\frac{m}{N}-\frac{m(m-1)}{N(N-1)}+ \right. \right.\\
&\left. \left. + i\left(3-2\frac{m}{N}-3\frac{m-2}{N-2}+2\frac{m(m-1)}{N(N-1)}\right) + i^2\left(\frac{m-2}{N-2}-\frac{m(m-1)}{N(N-1)}\right) \right) \right]=\\
= & \frac{m}{N} \left(1- \frac{m}{N}-\frac{m-1}{N-1} \left(3-2\frac{m-2}{N-2}-\frac{m}{N}\left(2-\frac{m-1}{N-1}\right)\right)\right)\frac{1}{i^2}+\\
& + \frac{m}{N}\frac{(m-1)}{(N-1)}\left(3\left(1-\frac{m-2}{N-2}\right)-2\frac{m}{N}\left(1-\frac{m-1}{N-1}\right)\right)\frac{1}{i}+\\ 
& + \frac{m}{N} \frac{(m-1)}{(N-1)}\frac{(m-2)}{(N-2)}-\left(\frac{m(m-1)}{N(N-1)}\right)^2 .
\end{split}
\end{equation*}
(ii)
For all such $i$ and $l$ that $1<i<l\leq k$
$$c_{il}=c_{li}=\operatorname{cov}(P@i\cdot I_i,P@l\cdot I_l)=\operatorname{E}(P@i\cdot I_i\cdot P@l\cdot I_l)-\operatorname{E}(P@i\cdot I_i)\cdot \operatorname{E}(P@l\cdot I_l).$$
Here, we can also use the conditional expectation for calculation
\[
\operatorname{E}(P@i I_i\cdot P@l I_l) = \operatorname{E}\big(\operatorname{E}(P@i I_i\cdot P@l I_l \mid I_i I_l)\big) 
= \operatorname{E}(P@i I_i\cdot P@l I_l \mid I_i I_l = 1)  P(I_i I_l = 1).
\]
If $I_iI_l=1$, then we know that the items in the $i$th and $l$th places are relevant, so we should take into account the possible arrangements of the other $m-2$ relevant items between the $N-2$ places. So,  
$P(I_iI_l=1)=\frac{C_{N-2}^{m-2}}{C_N^m}=\frac{m(m-1)}{N(N-1)}$. And then
\begin{equation*}
\begin{split}
\operatorname{E}(P@i\cdot & I_i\cdot P@l\cdot I_l \mid I_i I_l = 1)=\operatorname{E}\left(\frac{1}{i} \sum_{j=1}^{i} I_j 
\cdot \frac{1}{l} \sum_{r=1}^{l} I_r \mid I_i I_l = 1\right)=\\
= &\frac{1}{i\cdot l}\operatorname{E}\left(\left(1+\sum_{j=1}^{i-1}I_j\right)\left(1+\sum_{r=1}^{i-1}I_r+\sum_{r=i+1}^{l-1}I_r+1\right) \mid I_iI_l=1\right)=\\
= &\frac{1}{i\cdot l}\operatorname{E}\left(2+3\sum_{r=1}^{i-1}I_r+\sum_{r=i+1}^{l-1}I_r+\left(\sum_{j=1}^{i-1}I_j\right)^2+\sum_{j=1}^{i-1}I_j\sum_{r=i+1}^{l-1}I_r \mid I_iI_l=1\right)=\\
=&\frac{1}{i\cdot l}\operatorname{E}\left(2+4\sum_{r=1}^{i-1}I_r+\sum_{r=i+1}^{l-1}I_r+2\sum_{j=1}^{i-r}\sum_{r=j+1}^{i-1}I_jI_r+\sum_{j=1}^{i-1}\sum_{r=i+1}^{l-1}I_jI_r \mid I_iI_l=1\right)=\\
=&\frac{1}{i\cdot l}\left(2+4(i-1)\frac{m-2}{N-2}+(l-1-i)\frac{m-2}{N-2}+(i-1)(i-2)\frac{(m-2)(m-3)}{(N-2)(N-3)}+\right.\\
& \left. +(i-1)(l-1-i)\frac{(m-2)(m-3)}{(N-2)(N-3)}\right)=\\
= &\frac{1}{i\cdot l}\left(2+\frac{m-2}{N-2}(3i-5+l)+\frac{(m-2)(m-3)}{(N-2)(N-3)}(i-1)(l-3)\right).
\end{split}
\end{equation*}
Therefore, 
\begin{equation*}
\begin{split}
c_{i l} & = \frac{m(m-1)}{N(N-1)}\frac{1}{i\cdot l}\left[2-5\frac{m-2}{N-2}+i\cdot 3\frac{m-2}{N-2}+l\frac{m-2}{N-2}+i\cdot l\frac{(m-2)(m-3)}{(N-2)(N-3)}- \right.\\
&\left.-i\cdot 3\frac{(m-2)(m-3)}{(N-2)(N-3)}-l\frac{(m-2)(m-3)}{(N-2)(N-3)}+3\frac{(m-2)(m-3)}{(N-2)(N-3)}\right]-\\
&-\left(\frac{m}{N}\right)^2\frac{1}{i\cdot l}\left(1+(i-1)\frac{m-1}{N-1}\right)\left(1+(l-1)\frac{m-1}{N-1}\right)=\\
= & \frac{m}{N}\frac{1}{i\cdot l}\left[\frac{m-1}{N-1}\left(2-5\frac{m-2}{N-2}+3\frac{(m-2)(m-3)}{(N-2)(N-3)}+i\cdot 3\frac{m-2}{N-2}\left(1-\frac{m-3}{N-3}\right)+ \right.\right.\\
&\left. +l\frac{m-2}{N-2}\left(1-\frac{m-3}{N-3}\right)+i\cdot l\frac{(m-2)(m-3)}{(N-2)(N-3)}\right)-\frac{m}{N}\left(1+l\frac{m-1}{N-1}-\frac{m-1}{N-1}+ \right.\\
&+i\frac{m-1}{N-1}-\frac{m-1}{N-1} \left.\left.+il\left(\frac{m-1}{N-1}\right)^2-l\left(\frac{m-1}{N-1}\right)^2-i\left(\frac{m-1}{N-1}\right)^2+\left(\frac{m-1}{N-1}\right)^2\right)\right]=\\
= &\frac{m}{N}\frac{1}{i\cdot l}\left[\frac{m-1}{N-1}\left(2-5\frac{m-2}{N-2}+3\frac{(m-2)(m-3)}{(N-2)(N-3)}\right)-\frac{m}{N}\left(1-\frac{m-1}{N-1}\right)^2+ \right.\\
&+i\frac{m-1}{N-1}\left(3\frac{m-2}{N-2}\left(1-\frac{m-3}{N-3}\right)-\frac{m}{N}\left(1-\frac{m-1}{N-1}\right)\right)+\\
&+ l\frac{m-1}{N-1}\left(\frac{m-2}{N-2}\left(1-\frac{m-3}{N-3}\right)-\frac{m}{N}\left(1-\frac{m-1}{N-1}\right)\right)+\\
&+ \left. i\cdot l\frac{m-1}{N-1}\left(\frac{(m-2)(m-3)}{(N-2)(N-3)}-\frac{m}{N}\frac{m-1}{N-1}\right)\right].
\end{split}
\end{equation*}
Using the notations from the theorem's statement and  Lemma~\ref{lem1},  we'll get:
\begin{equation*}
\begin{split}
\operatorname{Var}&_{WOR}(AP@k)= \frac{1}{M^2} \Bigg( \sum_{i=1}^{k}c_{ii} + 2\sum_{i=1}^{k-1}\sum_{l=i+1}^{k}c_{il} \Bigg) =
\\
= &  \frac{1}{M^2} \Bigg[\sum_{i=1}^{k}\frac{m}{N}\Bigg(\frac{A}{i^2}+\frac{B}{i}+C \Bigg)+2\sum_{i=1}^{k-1}\sum_{l=i+1}^{k}\frac{m}{N}\Bigg(\frac{D}{i\cdot l}+\frac{E}{l}+\frac{F}{i}+G \Bigg) \Bigg] =
\\
= & \frac{1}{M^2} \frac{m}{N}\Bigg[ A \cdot H_k^{\left(2\right)} + B \cdot H_k + kC + D(H_k^2 - H_k^{\left(2\right)})  +\\
& +   2Ek(H_k-1) + 2F(k-H_k) + Gk(k-1) \Bigg] =
\\
= & \frac{1}{M^2} \frac{m}{N}\Bigg[ k[C+2(E-F) + (k-1)G] + H_k(B-2(E-kF))+\\
&+ H_k^2 \cdot D + H_k^{\left(2\right)}(A-D) \Bigg]. 
\end{split}
\end{equation*}
This completes the proof.
\end{proof}

\section{Expectation and variance of AP@k for the online evaluation (sampling with replacement)} \label{sec4}

We now turn to the online evaluation setting described in Section~\ref{sec1}. Here, we assume that each item is relevant to a user with a constant probability $p$. The randomness can be modeled as a sequence of independent Bernoulli trials, or equivalently as sampling with replacement, where each item has probability $p$ of being sampled. In this formulation, the decision whether a recommendation is relevant is made independently for each item. Moreover, we assume a uniform relevance probability $p$ for all recommendations and all users. These probabilistic assumptions differ from the offline case and lead to distinct formulas for the expectation and variance of AP@k.
 
From a mathematical point of view, this means that for each recommendation $j=1, 2, ..., N$, we define the random variables $I_j=1$ if it was relevant and $I_j=0$ otherwise, as in the previous case. But here the random variables $I_1, I_2,..., I_N$ are independent with mean $\operatorname{E}(I_j) = p$ and variance $\operatorname{Var} (I_j) = p(1-p)$, where $p$ is assumed to be known and represents the expected proportion of relevant items for each user. Of course, this assumption is quite strong and can be applied only when the users are relatively similar (homogeneous), as described in Section~\ref{sec1} for the online evaluation scenario.

Under these assumptions and notations, we have the following:

$$P@i=\frac{1}{i}\sum_{j=1}^iI_j, \quad \operatorname{E}(P@i)=\frac{1}{i}\sum_{j=1}^i\operatorname{E} (I_j)= \frac{1}{i}\sum_{j=1}^i p=p,$$
and $$AP@k=\frac{1}{k}\sum_{i=1}^k P@i\cdot I_{i}.$$

For sampling with replacement, the expected $AP@k$ for random recommendations is presented in the following theorem. 

\begin{theorem}\label{th1wr} If, for each user, every item is assumed to be independently relevant with probability $p$, then the value $ MAP_{WR}@k $ (the expectation of $ AP@k $ under sampling with replacement) is equal to
\begin{equation}\label{eq5}
MAP_{WR}@k= \operatorname{E}(AP@k) =p\left(p+\left(1-p\right)\frac 1kH_k\right), 
\end{equation}
where $H_k=\sum_{i=1}^k\frac1i$ is the $k$-th harmonic number as well as in Theorem~\ref{th1wor}.
\end{theorem}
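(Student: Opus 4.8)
The plan is to expand $AP@k$ directly and exploit independence, which makes the computation much cleaner than in the offline case of Theorem~\ref{th1wor}. First I would write
\[
AP@k=\frac1k\sum_{i=1}^k P@i\cdot I_i=\frac1k\sum_{i=1}^k\frac1i\Big(\sum_{j=1}^i I_j\Big)I_i,
\]
and pass the expectation inside the sums by linearity, so that the whole problem reduces to evaluating $\operatorname{E}\big((\sum_{j=1}^i I_j)I_i\big)$ for each fixed $i$.

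The key step is the evaluation of this inner expectation. I would split off the diagonal term using the identity $I_i^2=I_i$ (valid since $I_i\in\{0,1\}$), writing $(\sum_{j=1}^i I_j)I_i=\sum_{j=1}^{i-1}I_jI_i+I_i$. Independence of the $I_j$ then gives $\operatorname{E}(I_jI_i)=p^2$ for $j\neq i$ and $\operatorname{E}(I_i)=p$, so that $\operatorname{E}\big((\sum_{j=1}^i I_j)I_i\big)=(i-1)p^2+p$. This is exactly where the online model diverges from the offline one: sampling without replacement would instead produce the conditional factors $\tfrac{m-1}{N-1}$ appearing in Theorem~\ref{th1wor}, whereas here independence collapses every pairwise expectation to $p^2$.

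Substituting back and separating the sum, I would obtain
\[
\operatorname{E}(AP@k)=\frac1k\sum_{i=1}^k\frac1i\big((i-1)p^2+p\big)=\frac1k\Big(p^2\sum_{i=1}^k 1+p(1-p)\sum_{i=1}^k\frac1i\Big),
\]
where expanding $(i-1)p^2/i=p^2-p^2/i$ and combining the resulting $-p^2/i$ term with the $p/i$ term leaves the coefficient $p(1-p)$ on the harmonic sum. Recognizing $\sum_{i=1}^k 1=k$ and $\sum_{i=1}^k\frac1i=H_k$, and factoring out $p$, yields the stated formula $p\big(p+(1-p)\tfrac1k H_k\big)$.

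I do not expect any real obstacle here; the only point requiring care is the bookkeeping of the diagonal contribution, since overlooking $I_i^2=I_i$ would miscount the pairwise products and corrupt the coefficient of $H_k$. In contrast to the variance computation, no auxiliary harmonic identities from Lemma~\ref{lem1} are needed.
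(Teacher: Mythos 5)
Your proposal is correct and follows essentially the same route as the paper's proof: expand $AP@k$, use linearity, split the diagonal term via $I_i^2=I_i$, and apply independence to get $\operatorname{E}\big(\big(\sum_{j=1}^i I_j\big)I_i\big)=(i-1)p^2+p$, then sum. The bookkeeping of the $-p^2/i$ and $p/i$ terms matches the paper's computation exactly.
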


\begin{proof}

From the definition of $AP@k$ we have
\begin{equation*}
\begin{split}
MAP@k & =   \operatorname{E} (AP@k) = \operatorname{E}\bigg(\frac 1k \sum_{i=1}^k P@i\cdot I_{i}\bigg)= \\
&=\frac 1k \sum_{i=1}^k \operatorname{E}\bigg(\frac{1}{i}\sum_{j=1}^iI_j\cdot I_{i}\bigg)= \frac 1k \sum_{i=1}^k \bigg(\frac{1}{i}\sum_{j=1}^i\operatorname{E}( I_j\cdot I_{i})\bigg). 
\end{split}
\end{equation*}
For all $j\neq i$, $\operatorname{E}( I_j\cdot I_{i})= \operatorname{E}( I_j)\cdot \operatorname{E}(I_{i})=p^2$, as these random variables are  independent. If $j=i$, then  $\operatorname{E}(I_{i}^2)= \operatorname{E}(I_{i})=P\{I_i=1\}=p$. 
So, 
\begin{equation*}
\begin{split}
MAP@k&=\frac 1k \sum_{i=1}^k \frac{1}{i}\bigg(\sum_{j=1}^{i-1}\operatorname{E}( I_j\cdot I_{i})+\operatorname{E}(I_i)\bigg)=\frac 1k \sum_{i=1}^k \frac{1}{i}((i-1)p^2+p)=\\
&=\frac pk  \sum_{i=1}^k (p+(1-p)\frac 1i)= p^2+p(1-p)\frac 1k \sum_{i=1}^k \frac{1}{i}.
\end{split}
\end{equation*}
This completes the proof. 
\end{proof}

\begin{theorem}\label{th2wr}
    Under the assumptions of Theorem~\ref{th1wr}, the variance of the $AP@k$ can be calculated using the following formula
$$
\operatorname{Var}_{WR}(AP@k) = \frac{5}{k} \cdot p^3 (1-p) + \frac{1}{k^2} \cdot p(1-p) \left[p(1-2 p)(3 H_k + H_k^2) + (1-p)(1-3 p)H_k^{\left(2\right)} \right].
$$
Here $H_k=\sum_{i=1}^k\frac1i$ is the $k$-th harmonic number and $H_k^{\left(2\right)}=\sum_{i=1}^k\frac{1}{i^2}$ is the $k$-th partial sum of the 2nd order harmonic series.
\end{theorem}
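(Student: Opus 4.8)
The plan is to write $AP@k = S/k$ with $S := \sum_{i=1}^k X_i$ and $X_i := P@i\cdot I_i = \frac{1}{i}I_i\big(1+T_i\big)$, where $T_i := \sum_{j=1}^{i-1}I_j$, and then reduce everything to the covariance decomposition $\operatorname{Var}(AP@k)=\frac{1}{k^2}\operatorname{Var}(S)=\frac{1}{k^2}\big(\sum_{i=1}^k\operatorname{Var}(X_i)+2\sum_{1\le i<l\le k}\operatorname{Cov}(X_i,X_l)\big)$. The structural fact that makes the online case cleaner than the offline one is that $I_i$ is independent of $T_i$ (which involves only $I_1,\dots,I_{i-1}$) and that $I_i^2=I_i$. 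Since $T_i\sim\mathrm{Binomial}(i-1,p)$, so that $\operatorname{E}(1+T_i)=1+(i-1)p$ and $\operatorname{E}\big((1+T_i)^2\big)=1+3(i-1)p+(i-1)(i-2)p^2$, a direct second-moment computation gives the diagonal terms $\operatorname{Var}(X_i)=\frac{p(1-p)}{i^2}\big[1+3(i-1)p+(i-1)^2p^2\big]$.

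For the off-diagonal terms with $i<l$, I would split the longer prefix as $T_l=T_i+I_i+M$, where $M:=\sum_{j=i+1}^{l-1}I_j$, and use that the four blocks $T_i$, $I_i$, $M$, $I_l$ are mutually independent. Expanding $X_iX_l=\frac{1}{il}I_iI_l(1+T_i)(1+T_l)$ and using $I_i^2=I_i$ to collapse the repeated indicator, the expectation factorizes over the independent blocks. After subtracting $\operatorname{E}(X_i)\operatorname{E}(X_l)=\frac{p^2}{il}(1+(i-1)p)(1+(l-1)p)$, a pleasant cancellation occurs: the dependence on $l$ inside the bracket vanishes, leaving $\operatorname{Cov}(X_i,X_l)=\frac{p^2(1-p)}{il}\big[(1-2p)+2ip\big]=\frac{p^2(1-p)(1-2p)}{il}+\frac{2p^3(1-p)}{l}$. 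This simplification is the step I expect to be the main obstacle: one must track exactly which indicator is shared between the two factors and resist expanding into a tangle of quartic products. Organizing the computation around the independent blocks $T_i,I_i,M,I_l$ (rather than a raw fourfold sum) is what keeps it manageable.

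The final step is pure summation. I would sum $\operatorname{Var}(X_i)$ after rewriting its bracket as $1-3p+p^2+(3p-2p^2)i+p^2i^2$, obtaining a linear combination of $H_k^{(2)}$, $H_k$, and $k$; and sum the covariances over $i<l$ using Lemma~\ref{lem1}(ii)--(iii), namely $\sum_{i<l}\frac{1}{il}=\frac12\big(H_k^2-H_k^{(2)}\big)$ and $\sum_{i<l}\frac{1}{l}=k-H_k$. Collecting coefficients and using the factorizations $1-3p+2p^2=(1-p)(1-2p)$ and $1-4p+3p^2=(1-p)(1-3p)$ to tidy the $H_k^2$, $H_k$, and $H_k^{(2)}$ coefficients yields $\operatorname{Var}(S)=p(1-p)\big[5p^2k+p(1-2p)(H_k^2+3H_k)+(1-p)(1-3p)H_k^{(2)}\big]$. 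Dividing by $k^2$ then gives the claimed formula, with the $5p^2k/k^2$ contribution producing precisely the leading $\frac{5}{k}p^3(1-p)$ term.
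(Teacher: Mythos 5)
Your proposal is correct and follows essentially the same route as the paper: the same decomposition $\operatorname{Var}(AP@k)=\frac{1}{k^2}\big(\sum_i c_{ii}+2\sum_{i<l}c_{il}\big)$, the same exploitation of the independence of $I_i$ from the prefix sum $1+T_i$ (and of the disjoint blocks $T_i, I_i, M, I_l$ for the cross terms), the same cancellation of the $l$-dependence in the bracket of $\operatorname{Cov}(X_i,X_l)$, and the same use of Lemma~\ref{lem1} to reduce the double sums to $H_k$, $H_k^{(2)}$, and $H_k^2$. All intermediate expressions you give (the diagonal variance, the covariance $\frac{p^2(1-p)}{il}[(1-2p)+2ip]$, and the collected coefficients) agree with the paper's.
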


\begin{proof}
As before,   $\operatorname{Var}(AP@k) = \operatorname{Var}\left(\frac{1}{k} \sum_{i=1}^k P@i \cdot I_i\right) = \frac{1}{k^2} \sum_{i=1}^k \sum_{l=1}^k c_{il}$, and we need to calculate  the coefficient $c_{il}$ taking into account that  for every $i\neq l$
$c_{il} = c_{li} = \operatorname{cov}(P@i \cdot I_i, P@l \cdot I_l)$
and  for all $i$ 
$c_{ii} = \operatorname{Var}(P@i \cdot I_i).$

Let us calculate them again one by one. \\
(i)
For $i=1,2,...,k$
\begin{equation*}
\begin{split}
c_{ii}& = \operatorname{Var} \left( \frac{1}{i} \sum_{j=1}^{i} I_j I_i \right) = \frac{1}{i^2} \operatorname{Var} \left( \sum_{j=1}^{i-1} I_j I_i + I_i^2 \right)=\\
&=\frac{1}{i^2} \operatorname{Var} \left( \sum_{j=1}^{i-1} I_j I_i + I_i \right) = \frac{1}{i^2} \operatorname{Var} \left( I_i \left( 1 + \sum_{j=1}^{i-1} I_j \right) \right).
\end{split}
\end{equation*}
Here,  the random variables $I_{i}$ and $1 + \sum_{j=1}^{i-1} I_j$ are independent. The variance of the product of two independent random variables $X$ and $Y$ can be calculated using the following rule:
$$
\operatorname{Var}(X \cdot Y) = \operatorname{E}(X^2)\operatorname{E}\big(Y^2\big) - \big(\operatorname{E}(X))^2(\operatorname{E}(Y)\big)^2.
$$
As far as $\operatorname{E}\big(I_i^2\big) = \operatorname{E}(I_i) = p$, 

$\operatorname{E}\left(1 + \sum_{j=1}^{i-1} I_j\right) = 1 + \sum_{j=1}^{i-1} \operatorname{E}\big(I_j\big) = 1 + (i-1)p,$
and
\begin{equation*}
\begin{split}
\operatorname{E}&\left( \left( 1 + \sum_{j=1}^{i-1} I_j \right)^2 \right) =
\operatorname{E}\left( 1 + 2\sum_{j=1}^{i-1} I_j + \left( \sum_{j=1}^{i-1} I_j \right)^2 \right) =\\
&= \operatorname{E}\left( 1 + 2\sum_{j=1}^{i-1} I_j + \sum_{j=1}^{i-1} \sum_{r=1}^{i-1} I_j I_r \right) =\\
&= \operatorname{E}\left( 1 + 2\sum_{j=1}^{i-1} I_j + \sum_{j=1}^{i-1} I_j^2 + 2 \sum_{j=1}^{i-2} \sum_{r=j+1}^{i-1} I_j I_r \right) =\\
&= 1 + 3(i-1)p + (i-1)(i-2)p^2,
\end{split}
\end{equation*}
then
\begin{equation*}
\begin{split}
c_{ii}& = \frac{1}{i^2} \operatorname{Var} \left( I_i \cdot \left(1 + \sum_{j=1}^{i-1} I_j \right)\right) =\\
&= \frac{1}{i^2} \left( \operatorname{E} \left( I_i^2 \right) \operatorname{E}\left( \left(1 + \sum_{j=1}^{i-1} I_j \right)^2 \right) - \left( \operatorname{E}(I_i) \right)^2 \left(\operatorname{E} \left( 1 + \sum_{j=1}^{i-1} I_j \right) \right)^2 \right)=\\
& = \frac{1}{i^2} \left( p \left(1 + 3(i-1)p + (i-1)(i-2)p^2 \right) - p^2 \left(1 + (i-1)p \right)^2 \right)=\\
& = \frac{1}{i^2} \left( p + (3i-4)p^2 + (i-1)(i-4)p^3 - (i-1)^2p^4 \right)=\\
& = p \left( \frac{1}{i^2} + \frac{3}{i}p - \frac{4}{i^2}p + p^2 - \frac{5}{i}p^2 + \frac{4}{i^2}p^2 - p^3 + \frac{2}{i}p^3 - \frac{1}{i^2}p^3 \right)=
\\
& = p\left( p^2(1-p) + \frac{1}{i}p(3-2p)(1-p) + \frac{1}{i^2}(p^2 - 3p + 1)(1-p) \right)=
\\
& = p(1-p) \left( p^2 + p(3-2p)\frac{1}{i} + (1-3p+p^2)\frac{1}{i^2} \right).
\end{split}
\end{equation*}
(ii)  For all $i$ and $l$ for which $1<i<l\leq k$
$$c_{il} = c_{li} = \operatorname{cov}(I_i \cdot P@i, I_l \cdot P@l) = \operatorname{E}(I_i \cdot P@i \cdot I_l \cdot P@l) - \operatorname{E}(I_i \cdot P@i) \cdot \operatorname{E}(I_l \cdot P@l).$$
Here
\begin{equation*}
\begin{split}
\operatorname{E}&(I_i \cdot P@i \cdot I_l \cdot P@l) = \frac{1}{i\cdot l} \operatorname{E}\Big( I_i \Big(1+\sum_{j=1}^{i-1} I_j\Big) \cdot I_l \Big(1+\sum_{r=1}^{l-1} I_r\Big) \Big) =
\\
&= \frac{1}{i\cdot l} \operatorname{E}\Big( I_i I_l \Big(1+\sum_{j=1}^{i-1} I_j\Big) \Big(1+\sum_{r=1}^{i-1} I_r + I_i + \sum_{r=i+1}^{l-1} I_r\Big) \Big) =
\\
& = \frac{1}{i\cdot l} \operatorname{E}\Big( I_i I_l \Big(1+\sum_{j=1}^{i-1} I_j\Big)^2 + I_i^2 I_l \Big(1+\sum_{j=1}^{i-1} I_j\Big) + I_i I_l \Big(1 + \sum_{j=1}^{i-1} I_j\Big) \Big(\sum_{r=i+1}^{l-1} I_r\Big) \Big) =
\\
&= \frac{p^2}{i\cdot l} \Big( 1+3(i-1)p+(i-1)(i-2)p^2 + 1+(i-1)p +  (1+(i-1)p)(l-1-i)p \Big) =
\\
& = \frac{p^2}{i\cdot l} \Big( 2+(3i-5+l)p + (i-1)(l-3)p^2  \Big).
\end{split}
\end{equation*}
Therefore, 
\begin{equation*}
\begin{split}
c_{il} & = \frac{p^2}{i\cdot l} \Big( 2+(3i-5+l)p + (i-1)(l-3)p^2   -  \Big(1+(i-1)p\Big)\Big(1+(l-1)p\Big)\Big) =
\\
&= \frac{p^2}{i\cdot l}  \Big( 2 - 1 + p(3i-l-5-i+1-l+1) + p^2(i-1)(l-3-l+1) \Big) =
\\
&= \frac{p^2}{i\cdot l}  \Big( 1 + i \cdot 2p - 3p - i \cdot 2p^2 + 2p^2 \Big)
=  \frac{1}{l} p^2(1-p) \Big( (1-2p)\frac{1}{i} + 2p \Big).
\end{split}
\end{equation*}
Summing up all the coefficients $\{c_{il}\}_{i,l=1}^k$ we obtain
\begin{equation*}
\begin{split}
\sum_{i=1}^{k}  \sum_{l=1}^{k}  c_{il} = &  \sum_{i=1}^{k} c_{ii} + 2\sum_{i=1}^{k-1} \sum_{l=i+1}^{k} c_{il} =
\\
= & \sum_{i=1}^{k} \Bigg( p(1-p) \Big( p^2 + p (3-2p) \frac{1}{i} + (1-3p+p^2) \frac{1}{i^2} \Big) \Bigg) +
\\
&+ 2\sum_{i=1}^{k-1} \sum_{l=i+1}^{k} p^2(1-p) \Bigg( (1-2p) \frac{1}{i\cdot l} + 2p \frac{1}{l} \Bigg) =
\\
= & k p^3(1-p) + p^2(3-2p) \sum_{i=1}^{k} \frac{1}{i} + p(1-3p+p^2) \sum_{i=1}^{k} \frac{1}{i^2} +
\\
&+ 2p^2(1-p)(1-2p) \sum_{i=1}^{k-1} \sum_{l=i+1}^{k} \frac{1}{i\cdot l} + 4p^3(1-p) \sum_{i=1}^{k-1} \sum_{l=i+1}^{k} \frac{1}{l}.
\end{split}
\end{equation*}

Using the results from  Lemma~\ref{lem1} we, finally, get
\begin{equation*}
\begin{split}
\operatorname{Var}& (AP@k) = \frac{1}{k^2} \Bigg( \sum_{i=1}^{k}c_{ii} + 2\sum_{i=1}^{k-1}\sum_{l=i+1}^{k}c_{il} \Bigg) =
\\
= &\frac{1}{k^2} \Bigg( p(1-p) \sum_{i=1}^{k} \Big( p^2 + p(3-2p) \frac{1}{i} + (1-3p+p^2) \frac{1}{i^2} \Big) +
\\
&+ 2p^2(1-p) \sum_{i=1}^{k-1} \sum_{l=i+1}^{k} \Big( (1-2p) \frac{1}{i\cdot l} + 2p \frac{1}{l} \Big) \Bigg) =
\\
=& \frac{1}{k^2} p(1-p) \Bigg( k p^2 + p(3-2p) \sum_{i=1}^{k} \frac{1}{i} + (1-3p+p^2) \sum_{i=1}^{k} \frac{1}{i^2} + 
\\
&+ 2p(1-2p) \sum_{i=1}^{k-1} \sum_{l=i+1}^{k} \frac{1}{i \cdot l} + 4p^2 \sum_{i=1}^{k-1} \sum_{l=i+1}^{k} \frac{1}{l} \Bigg) =
\\
=&\frac{1}{k} \big( p^3(1-p) + 4p^3(1-p) \big) 
+ \frac{1}{k^2} p(1-p) \Big( H_k \big( p(3-2p) - 4p^2 \big) +
\\
& + H_k^{(2)} \big( 1 - 3p + p^2 - p(1 - 2p) \big) + H_k^2 p(1 - 2p) \Big) =
\\
=& \frac{5}{k}p^3(1-p)+\frac{1}{k^2}p(1-p)\left(p(1-2p)\big(3H_k+H_k^2\big)+(1-p)(1-3p)H_k^{(2)}\right).
\end{split}
\end{equation*} 
\end{proof}

\section{Practical outcomes and discussion} \label{sec5}
Before presenting some practical results, it is helpful to summarize the key differences in the formulas for the offline and online settings. In the \textit{offline} case, the distribution of AP@k depends on the total number of items $N$, the number of relevant items $m$, and the cut-off $k$ that specifies how many top-ranked positions are taken into account. This reflects the fact that evaluation is performed on a finite set of items, where exactly $m$ out of $N$ are relevant and their positions in the ranking determine the outcome. In the \textit{online} setting, by contrast, the parameter $N$ does not appear in the formulas, since evaluation is restricted to the top-$k$ recommendations and each of them is modeled as an independent Bernoulli trial with relevance probability $p$. This makes the outcome depend only on $k$ and $p$, regardless of the overall size of the candidate pool, which in practice may be very large. Conceptually, the probability $p$ plays a role analogous to the ratio $m/N$ in the offline case: while offline evaluation fixes the proportion of relevant items deterministically, online evaluation assumes it probabilistically.

To cover a representative range of conditions, $N=50$  is fixed and several choices of $m$ and $k$ are examined. The half-density case ($m=25$) with varying cut-offs illustrates the situations $m>k$, $m=k$, and $m<k$. Lower prevalences ($m=10$ and $m=2$) are also included along with a high-prevalence setting ($m=35$, corresponding to $p>0.5$). These scenarios span the spectrum from very sparse to very dense relevance distributions, thereby testing the formulas under diverse conditions. The selected configurations are summarized in Table~\ref{table1}.
\begin{center}
   	\begin{table}[ht] \footnotesize
		\caption{\textbf{Examined scenarios}}\vspace{-8pt}
		\centering
		\begin{tabular}{|c|c|c|c|c|} 
			\hline
			\textbf{Scenario}   	& \textbf{Offline: $(N,m)$}& \textbf{Online: $p$} &\textbf{ $k$} &\textbf{ Characteristic }\\
			
			\hline
			A1   & (50, 25) 			& 	0.50			 & 5		   & $m>k$, balanced case (50\% relevant)	\\ \hline
			A2	& (50, 25) 			&	 0.50			 & 25 	   & $m=k$, balanced case (50\% relevant)  	\\ \hline
			A3	& (50, 25) 			&	0.50			 & 40		   & $m<k$, balanced case (50\% relevant)	\\
			\hline
			B   & (50, 10) 			& 	0.20			 & 20		   & moderate proportion of relevant items	\\ \hline
			C	& (50, 2) 			&	 0.04			 & 20 	   & very low proportion of relevant items  	\\ \hline
			D	& (50, 35) 			&	0.70			 & 20		   & high proportion of relevant items $(p > 0.5)$	\\
            \hline
		\end{tabular}
		\label{table1}
	\end{table}
\end{center}
\vspace{10pt}

Table~\ref{table2} reports the theoretical (derived from the formulas in Sections ~\ref{sec3} and ~\ref{sec4})  means and variances of AP@k for the selected scenarios.  The offline and online settings are placed side by side, allowing examination of their behavior under comparable conditions and observation of where the two randomization schemes produce similar outcomes and where differences emerge.
\begin{center}
\begin{table}[ht] \footnotesize
    \caption{\centering \textbf{Values of expectation and variance for AP@k for scenarios from Table~\ref{table1}}}
    \vspace{-8pt}
    \centering
    \begin{tabular}{|c|c|c|c|c|} 
        \hline
        \textbf{Scenario}    & \textbf{WOR Expectation}& \textbf{WR Expectation} &\textbf{WOR Variance } &\textbf{WR Variance}\\
        \hline
        A1 & 0.36139  & 0.36416  & 0.05464  & 0.05884  \\
        \hline
        A2 & 0.28387  & 0.28816  & 0.00735  & 0.01234  \\
        \hline
        A3 & 0.43550  & 0.27674  & 0.00699  & 0.00775  \\
        \hline
        B  & 0.13221  & 0.06878  & 0.00786  & 0.00294 \\
        \hline
        C  & 0.07865  & 0.00851  & 0.01563  & 0.00023  \\
        \hline
        D  & 0.52426  & 0.52778  & 0.01502  & 0.02195  \\
        \hline
    \end{tabular}
    \label{table2}
\end{table}
\end{center}

As seen in Table~\ref{table2},  the comparison between offline (WOR) and online (WR) shows no uniform pattern: in some scenarios the two models behave very similarly, while in others their outcomes diverge noticeably.  The observed differences reflect the $\min(m,k)$ normalization in offline evaluation and the independence structure of online sampling. Together, these results provide a clear basis for interpreting MAP@k under the random rankings.

In addition to the values reported in Table~\ref{table2}, Figures~\ref{fig-1} and \ref{fig-2} show the simulated distributions of AP@k for two representative scenarios. These examples illustrate how WOR and WR can differ not only in their means but also in their variability.

\begin{figure}[tb]
    \includegraphics[width=0.6\textwidth]{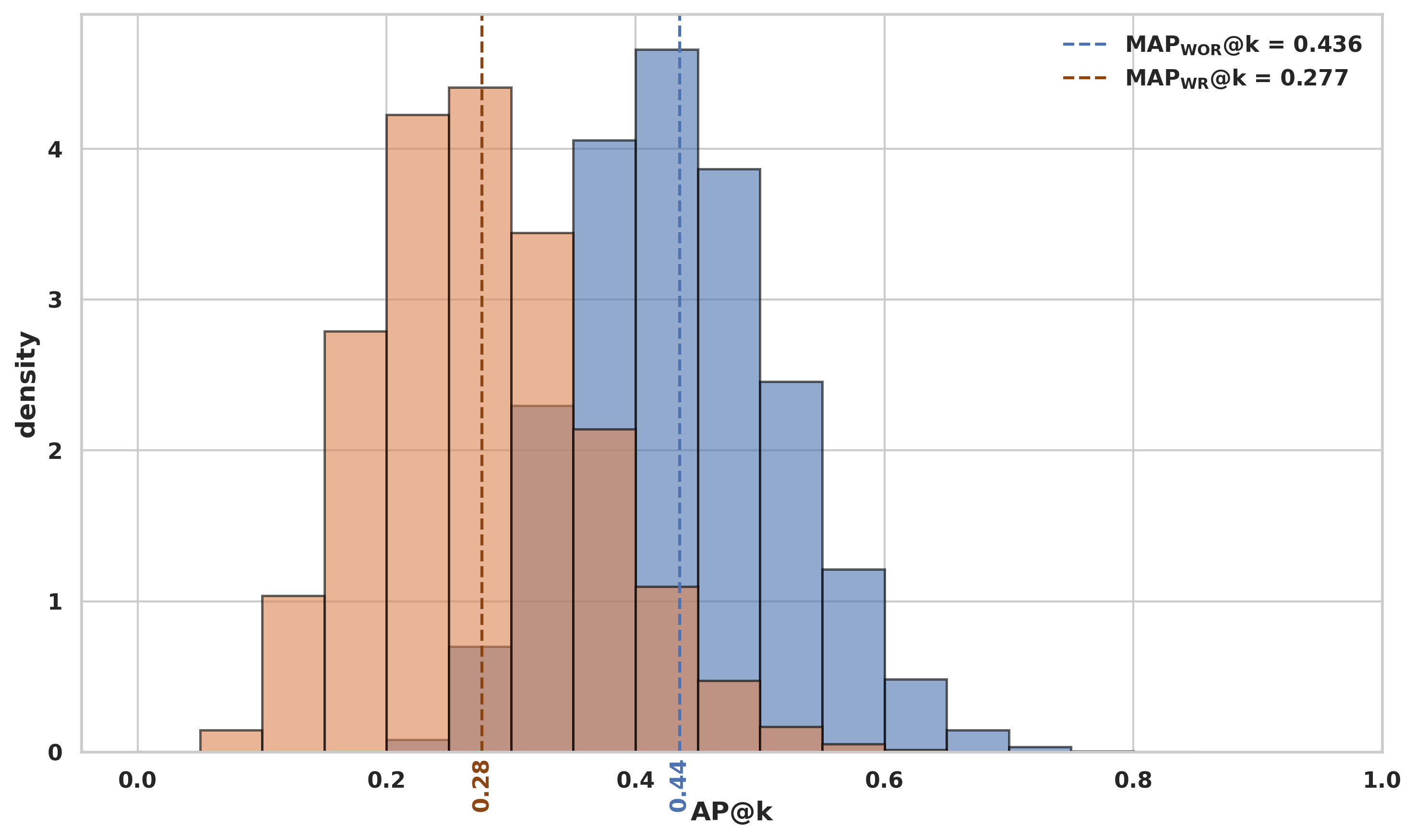} 
    \caption{\footnotesize  Histogram of AP@k values for Scenario A3 $(N = 50, m = 25, p = 0.5, k = 40)$. The WOR distribution (blue) is shifted to higher values compared to WR (orange), reflecting normalization by $\min(m, k)$ in the offline case }
    \label{fig-1}
\end{figure}

\begin{figure}[tb]
    \includegraphics[width=0.6\textwidth]{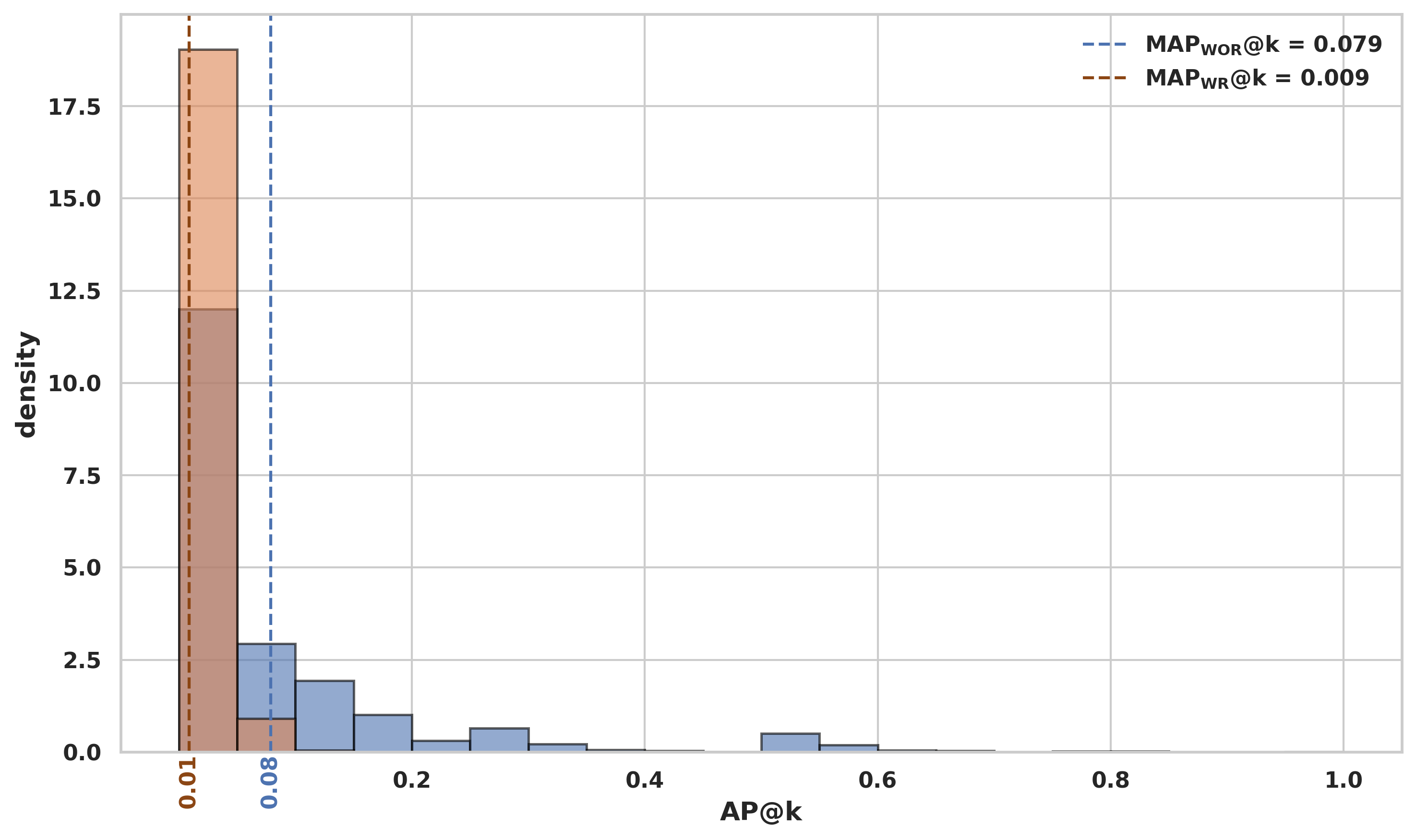}
     \caption{ \footnotesize  Histogram of AP@k values for Scenario C $(N = 50, m = 2,  p = 0.04, k = 20)$. The WR distribution (orange) is tightly concentrated around zero, while the WOR distribution (blue) shows wider variability due to the finite number of relevant items}
    \label{fig-2}
\end{figure}

In \textit{Scenario A3} ($m < k$), the distributions are shifted relative to each other: the offline case yields systematically higher values due to normalization by $\min(m,k)$, while the online case, which models relevance probabilistically, results in lower means.

In \textit{Scenario C} (very low prevalence), the contrast lies in variability. The WR distribution is sharply concentrated, reflecting that almost always no relevant items appear in the top-$k$, whereas the WOR distribution remains wider because the fixed number of relevant items can still occupy different positions in the ranking.

 These results clarify what levels of MAP@k can be expected by chance and how much variability surrounds them. From a practical standpoint, these findings offer two key insights. First, the expected value of AP@k serves as a clear reference point against which observed performance can be compared. Second, the variance quantifies the random fluctuations around this reference, helping to distinguish genuine improvements from chance outcomes. Together, expectation and variance enhance the interpretability of MAP@k and support a more principled benchmarking of recommendation algorithms.

Future research may extend this analysis to related metrics and investigate how statistical testing frameworks can build on these results for a robust comparison of ranking systems.

\bibliographystyle{amsplain}

\end{document}